\newcommand{\be}{\begin{equation}}
\newcommand{\ee}{\end{equation}}
\newcommand{\ba}{\begin{array}}
\newcommand{\ea}{\end{array}}
\newcommand{\bea}{\begin{eqnarray}}
\newcommand{\eea}{\end{eqnarray}}
\newcommand{\cC}{{\cal C }}
\newcommand{\cN}{{\cal N }}
\newcommand{\cP}{{\cal P }}
\newtheorem{lemma}{Lemma}
\newtheorem{remark}{Remark}
\newcommand*{\cS}{\mathcal{S}}
\newcounter{protoCount}
\newcounter{protoList}
\newsavebox{\tmpbox}
\newlength{\protobox}
\newenvironment{protocol}[3]{
\addtocounter{protoCount}{1}
\noindent \begin{lrbox}{\tmpbox}
\setlength{\protobox}{0.9\textwidth}
\addtolength{\protobox}{-0.5cm}
\begin{minipage}[c]{\protobox}
\begin{bfseries} 
#2\end{bfseries}
\ifthenelse{\equal{#3}{\empty}}{}{\\ #3}
\begin{list}{\begin{bfseries}\arabic{protoList}:\end{bfseries}}
{\usecounter{protoList}}
}{
\end{list}
\end{minipage}\end{lrbox}
\fbox{\usebox{\tmpbox}}
\bigskip
}
\begin{document}
\title{How to efficiently select an arbitrary Clifford group element}
\author{Robert Koenig}
\affiliation{Institute for Quantum Computing and Department of Applied Mathematics, University of Waterloo, Waterloo ON N2L 3G1, Canada} 
\author{ John A. Smolin}
\affiliation{IBM T.J. Watson Research Center, Yorktown Heights, NY 10598, USA}

\begin{abstract}
We give an algorithm which produces a unique element of the Clifford group
on $n$~qubits ($\cC_n$) from an integer $0\le i < |\cC_n|$ (the number
of elements in the group). The algorithm involves~$O(n^3)$ operations.
It is a variant of the subgroup algorithm by Diaconis and
Shahshahani~\cite{DiaSha87} which is commonly applied to compact Lie
groups. We provide an adaption for the symplectic
group~$Sp(2n,\mathbb{F}_2)$ which provides, in addition to a
canonical mapping from the integers to group elements $g$, a
factorization of~$g$ into a sequence of at most $4n$~symplectic
transvections.  The algorithm can be used to efficiently select random
elements of  $\cC_n$ which is often useful in quantum information
theory and quantum computation. We also give an algorithm for the inverse map, indexing a group element in time~$O(n^3)$.
\end{abstract}
\maketitle

\section{Introduction}

The Clifford group (which we will define carefully below) is of great
interest in the field of quantum information and computation.  Though
the group is not universal for quantum computation
\cite{GottesmanKnill}, it is central in the field of quantum
error-correction codes \cite{Gottesman97}, and the use of
\emph{random} elements of the Clifford group has numerous
applications, from establishing bounds on quantum capacities
\cite{BDSW} to randomized benchmarking
\cite{EmersonAlickiZyczkowski2005,Knilletal2008,Magesanetal2011} to
data hiding \cite{DiVincenzoetal02}.  Most of these applications depend
on the useful fact that the uniform distribution over Clifford group elements constitutes a 2-design for the unitary group, that is, 
reproduces the second moments of a Haar-random unitary (see \cite{BDSW,DiVincenzoetal02,Dankert09}).

There are many ways of choosing a random Clifford element.  The most
straightforward is to simply write down all the elements of the group,
and then pick randomly from the list.  This quickly becomes impractical
because the cardinality of the group 
\begin{equation}
|\cC_n|=2^{n^2+2n}\prod_{j=1}^n (4^j-1)
\end{equation}
grows quickly with the number of qubits $n$~\footnote{As pointed out
  in~\cite{ozols}, this does not agree with e.g.,~\cite{Calde98},
  since~\cite{Calde98} assumes that~$\cC_n$ is generated by $H,P$ and
  $CNOT$, and these generate additional phases because $(PH)^3=e^{\pi
    i/4}$ resulting in an additional factor of~$8$.  This extra
  phase, irrelevant to quantum mechanics, is needed in order to write down
  a unitary representation of the group. 
}.  Other (approximate) methods have
been proposed: In \cite{DiVincenzoetal02} a method is given requiring time
$O(n^8)$ and producing an approximately random Clifford, and \cite{Dankert09} gives a method that produces an
$\epsilon$-approximate unitary $2$-design based on Cliffords (consisting of only $n\log 1/\epsilon$ gates).


Our method gives a canonical mapping of consecutive integers to a Clifford group
element.  Picking a random element is equivalent then to picking a
random integer of the size of the group.  We give both $O(n^4)$ and
$O(n^3)$ algorithms for computing the group element from the
associated integer. We also give a $O(n^3)$~algorithm realizing the inverse map, i.e., taking  group elements to integers. 

\subsection{The Pauli, Clifford, and Symplectic groups }

The Pauli group~ $\cP_n$  on $n$ qubits is 
generated by single-qubit Pauli operators 
$X_j=\left(\begin{matrix} 0 & 1\\ 1 & 0\end{matrix}\right),
  Y_j=\left(\begin{matrix} 0 & -i\\ i & 0\end{matrix}\right),
    Z_j=\left(\begin{matrix} 1 & 0\\ 0 & -1\end{matrix}\right)$
 acting on the $j$th qubit, for $j=1,\ldots,n$.  Consider the
 normalizer $\cN(\cP_n)=\{U\in U(2^n)\ |\ U\cP_n U^\dagger=\cP_n\}$ of
 $\cP_n$ in the group of unitaries~$U(2^n)$.
The Clifford group $\cC_n$ is this normalizer, neglecting
the global phase:  $\cC_n=\cN(\cP_n)/U(1)$.
 Any element
 $U\in \cC_n$ is uniquely determined up to a global phase by its
 action by conjugation on the generators of~$\cP_n$, {\em i.e.}  the
 list of parameters~$(\alpha,\beta,\gamma,\delta,r,s)$ where
 $\alpha,\beta,\gamma,\delta$ are $n\times n$ matrices of bits, and
 $r,s$ are $n$-bit vectors defined by
\begin{align}
U X_j U^\dagger=(-1)^{r_j}\prod_{i=1}^n X_i^{\alpha_{ji}}Z_i^{\beta_{ji}}\quad\textrm{ and }\quad U Z_j U^\dagger=(-1)^{s_j} \prod_{i=1}^n X_i^{\gamma_{ji}}Z_i^{\delta_{ji}}\ .\label{eq:unitaryrep}
\end{align}
Note that because unitaries preserve commutation relations among the
generators not all values for the matrices
$\alpha,\beta,\gamma,\delta$ are allowed.  This is what makes picking
a random element of the group nontrivial.  
By~\eqref{eq:unitaryrep}, the task of drawing a random Clifford
element can be rephrased as that of drawing from the corresponding
distribution of parameters~$(\alpha,\beta,\gamma,\delta,r,s)$
describing such an element.

Note also that given the list~$(\alpha,\beta,\gamma,\delta,r,s)$, there is
a classical algorithm for compiling a circuit implementing~$U$ which
is composed of $O(n^2/\log n)$~gates from the gate set
$\{\textrm{H},\textrm{CNOT},\textrm{P}\}$, see~\cite{AAGott04}. A
simpler and more (time-)efficient algorithm was proposed earlier
in~\cite{Gottesman97}; it essentially performs a form of Gaussian
elimination, has runtime~$O(n^3)$ and produces a circuit with~$O(n^2)$
gates.

The group $\cC_n/\cP_n$ has a particularly  
simple form: we have
\begin{align}
\cC_n/\cP_n  \cong {\rm Sp}(2n,\mathbb{F}_2)\equiv {\rm Sp}(2n)\label{eq:symplecticgrouprelation}
\end{align}
where the latter is the symplectic group on $\mathbb{F}_2^{2n}$, {\em i.e.}, the group of $2n\times 2n$ matrices~$S$ with entries in the two-element field $\mathbb{F}_2$ such that \begin{align}
S\Lambda(n) S^T=\Lambda(n) \equiv \bigoplus_{i=1}^n\left(\begin{matrix}
0 & 1 \\
1 & 0
\end{matrix}\right) \ .\label{eq:symplecticdef}
\end{align} In this expression, the block-diagonal matrix $\Lambda(n)$
defines the symplectic inner product $\langle v,w\rangle=v^T\cdot
\Lambda(n)w$ on~$\mathbb{F}_2^{2n}$.  Preservation of the symplectic
inner product~\eqref{eq:symplecticdef} is equivalent to the
preservation of commutation relations between the generators
of~$\cP_n$ when acted on by conjugation with the corresponding
unitary. Explicitly, if a representative $U\in \cC_n/\cP_n$
acts as~\eqref{eq:unitaryrep}, then the corresponding symplectic
matrix~$S$ has entries
\begin{align}
(\alpha_{j1},\beta_{j1},\ldots,\alpha_{jn},\beta_{jn})&\textrm{ in column } 2j-1\textrm{ and }\nonumber\\
(\gamma_{j1},\delta_{j1},\ldots,\gamma_{jn},\delta_{jn})&\textrm{ in column }2j\ ,\textrm{ for }j=1,\ldots,n\ .\label{eq:symplecticenc}
\end{align} 

Eq.~\eqref{eq:symplecticgrouprelation} gives an important
simplification to our algorithm, directly implying the following
lemma:
\begin{lemma}
Specifying an arbitrary element of the Clifford group is equivalent to
specifying an element of the Pauli group and also an element from the symplectic group.
\end{lemma}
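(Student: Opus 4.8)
The plan is to promote the group isomorphism~\eqref{eq:symplecticgrouprelation} to an explicit bijection between $\cC_n$ and the Cartesian product $\cP_n\times{\rm Sp}(2n)$, working throughout modulo global phases exactly as in the definition $\cC_n=\cN(\cP_n)/U(1)$. Since $\cN(\cP_n)$ normalizes $\cP_n$, the image of $\cP_n$ is a normal subgroup of $\cC_n$, and~\eqref{eq:symplecticgrouprelation} identifies the quotient with ${\rm Sp}(2n)$; thus there is a short exact sequence $1\to\cP_n\to\cC_n\xrightarrow{\pi}{\rm Sp}(2n)\to 1$, where $\pi$ sends $U$ described by~\eqref{eq:unitaryrep} to the symplectic matrix $S$ assembled from $(\alpha,\beta,\gamma,\delta)$ according to~\eqref{eq:symplecticenc}. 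Choosing any set-theoretic section $s\colon{\rm Sp}(2n)\to\cC_n$ of $\pi$, every $U\in\cC_n$ factors uniquely as $U=P\cdot s(\pi(U))$ with $P:=U\,s(\pi(U))^{-1}\in\cP_n$, and conversely every pair $(P,S)\in\cP_n\times{\rm Sp}(2n)$ produces the element $P\cdot s(S)$. This is the claimed bijection, and the lemma follows.

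Concretely, I would realize the section as follows. First I would use surjectivity of $\pi$ (which is part of~\eqref{eq:symplecticgrouprelation}, and for which an explicit construction is given in~\cite{Gottesman97,AAGott04}) to pick, for each symplectic $S$, a Clifford unitary realizing it, and then normalize it to $U_S$ with vanishing sign bits $r=s=0$ in~\eqref{eq:unitaryrep}; this is arranged by multiplying on the left by a suitable Pauli operator. Set $s(S):=U_S$. For $P\in\cP_n$ labelled by $p\in\mathbb{F}_2^{2n}$ one then computes $(PU_S)X_j(PU_S)^\dagger=(-1)^{\langle p,\,v_j\rangle}\prod_i X_i^{\alpha_{ji}}Z_i^{\beta_{ji}}$, and likewise for $Z_j$, where $v_j$ is the relevant column of $S$ from~\eqref{eq:symplecticenc}; hence the sign vector $(r,s)$ of $PU_S$ is a fixed $\mathbb{F}_2$-linear image of $p$. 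Because $S$ and $\Lambda(n)$ are invertible, this linear map is a bijection of $\mathbb{F}_2^{2n}$, so as $P$ ranges over $\cP_n$ the pair $(r,s)$ ranges over all of $\mathbb{F}_2^{2n}$ exactly once. Therefore the fiber $\pi^{-1}(S)$ is precisely the coset $\cP_n\,U_S$, and uniqueness of the factorization $U=P\,s(S)$ is immediate; in particular this also exhibits $\cC_n$ in bijection with ${\rm Sp}(2n)\times\mathbb{F}_2^{2n}$ via the data $(S,r,s)$.

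As a sanity check, the cardinalities compose correctly: $|\cC_n|=|\cP_n|\cdot|{\rm Sp}(2n)|=2^{2n}\cdot 2^{n^2}\prod_{j=1}^n(4^j-1)=2^{n^2+2n}\prod_{j=1}^n(4^j-1)$ (with Pauli operators counted modulo phase, $|\cP_n|=2^{2n}$), matching the formula quoted in the introduction. I expect the only real care needed is bookkeeping of the phase conventions — being precise about which phases are quotiented out in $\cP_n$ versus in $\cC_n$, and checking that the Pauli-induced shift of $(r,s)$ is a genuine bijection rather than merely a surjection, which is exactly where invertibility of $\Lambda(n)$ and of $S$ enters. Everything else is a direct consequence of the isomorphism~\eqref{eq:symplecticgrouprelation} already in hand.
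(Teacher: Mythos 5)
Your proof is correct and takes essentially the same route the paper implicitly takes: the paper states the lemma as a direct consequence of the quotient isomorphism $\cC_n/\cP_n\cong{\rm Sp}(2n)$ from~\eqref{eq:symplecticgrouprelation}, noting only that the Pauli part reduces to the sign bits $(r,s)$. What you add — the short exact sequence, the explicit section $s\colon {\rm Sp}(2n)\to\cC_n$ via sign-normalization, and the check that $p\mapsto(r,s)$ is an $\mathbb{F}_2$-linear bijection because $S$ and $\Lambda(n)$ are invertible — is the standard rigorous version of the argument the paper leaves as self-evident.
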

Specifying an element of the Pauli group (up to an overall phase)
simply requires picking the bitstrings $r,s$, which is trivial.  We therefore
concentrate on how to specify elements from the symplectic group henceforth.

\subsection{Symplectic Gram-Schmidt procedure}
We will 
make use of a simple generalization of the
Gram-Schmidt orthogonaliztion procedure over the symplectic inner
product.  The basic step in this procedure takes as input a set of
vectors $\Omega\subset\mathbb{F}_2^{2n}$ and a vector $v\in\Omega$. If
$\langle v,f'\rangle=0$ for all $f'\in\Omega\backslash\{v\}$, the
output is the set $\Omega'=\Omega\backslash\{v\}$. Otherwise, the
output is a vector~$w\in\Omega\backslash\{v\}$ such that the pair
$(v,w)\in \cS_n$ is symplectic (that is, satisfies $\langle v,w\rangle=1$) and a set $\Omega'$ such that
\begin{enumerate}[(i)]
\item
$\Omega$ and $\Omega'\cup \{v,w\}$ span the same space, $|\Omega'|\leq |\Omega|-2$, and 
\item
$\langle v,f'\rangle=\langle w,f'\rangle=0$ for all $f'\in \Omega'$. 
\end{enumerate}
The vector $w$ and $\Omega'$ are obtained by first choosing~$w\in\Omega\backslash\{v\}$  such that  $\langle v,w\rangle=1$ and subsequently inserting the
vector
$f+\langle v,f\rangle w+\langle w,f\rangle v$
into~$\Omega'$ for each  $f\not\in\Omega\backslash\{v,w\}$.

Repeatedly picking a vector~$v$ (arbitrarily) in the resulting set~$\Omega'$ and reapplying this basic step yields a symplectic basis of the space spanned by the original set of vectors~$\Omega$. In particular, for any non-zero vector $v\in\mathbb{F}_2^{2n}$, a symplectic basis $(v_1,w_1,v_2,w_2,\ldots,v_n,w_n)$ of $\mathbb{F}_2^{2n}$, {\em i.e.}, a basis satisfying
\begin{align}
\langle v_j,w_k\rangle =\delta_{j,k}\qquad\textrm{ and }\qquad \langle v_j,v_k\rangle =\langle w_j,w_k\rangle =0\ 
\end{align}
with $v_1=v$ can be obtained  starting from $\Omega=\{v\}\cup \{e_1,\ldots,e_{2n}\}$, where  $e_1,\ldots,e_{2n}\in\mathbb{F}_2^{2n}$ are the standard basis vectors of $\mathbb{F}_2^{2n}$. The complexity of this procedure is easily seen to be~$O(n^3)$.

\subsection{The subgroup algorithm}
Our algorithm is an adaptation of a method for generating random
matrices from the classical compact Lie groups by Diaconis and
Shahshahani~\cite{DiaSha87} (also see~\cite{Mezz07} for a nice
description). In~\cite{DiaSha87},  a method for the Lie group $\mathsf{Sp}(2n,\mathbb{C})$ is given which partly relies on the fact that its group elements can be represented as $n\times n$ matrices with entries in the quaternions. In our case, we do not have this tool at our disposal since we are working over a finite field. Getting an efficient algorithm therefore requires some additional effort.

The core of these algorithms is called the {\em subgroup algorithm}, which is most easily explained for a finite group~$G$ with a nested chain of subgroups
\begin{align}
G_1\subset G_2\subset\cdots\subset G_{n-1}\subset G_n=G\ .\label{eq:towerofsubgroups}
\end{align}
In this situation, the map
\begin{align*}
G_n/G_{n-1}\times G_{n-1}/G_{n-2}\times\cdots\times G_2/G_1\times G_1&\rightarrow G\\
([g_n],[g_{n-1}],\ldots,[g_2],g_1)&\rightarrow g_ng_{n-1}\cdots g_1
\end{align*}
is an isomorphism. In particular, each $g\in G$ has a unique representation as $g_ng_{n-1}\cdots g_1$ with $[g_j]\in G_j/G_{j-1}$ for $j=2,\ldots,n$ and $g_1\in G_1$. This implies that given an element $g_j\in G_j$ representing a uniformly random coset $[g_j]\in G_j/G_{j-1}$ for every $j=2,\ldots,n$, and a uniformly chosen random element $g_1\in G_1$, we can obtain a uniformly distributed element of~$G$ by taking the product.

In our case we take $G_j = {\rm Sp}(2j)$
where the embedding ${\rm Sp}(2(j-1))\rightarrow {\rm Sp}(2j)$ is given by 
$S\mapsto\left(\begin{matrix}1 & 0\\
0 & 1 \end{matrix}\right)\oplus S$. Furthermore, it is easy to see that 
there is a one-to-one correspondence between the 
set 
\begin{align*}
\cS_n:=\{(v,w)\in\mathbb{F}_2^{2n}\times \mathbb{F}_2^{2n}\ |\ \langle v,w\rangle=1\}
\end{align*}
of symplectic pairs of vectors
and the cosets ${\rm Sp}(2n)/{\rm Sp}(2(n-1))$. More precisely, 
let $S_{v,w}\in {\rm Sp}(2n)$ be a symplectic matrix with $v$ in the first and $w$ in the second column for any symplectic pair~$(v,w)\in\cS_n$ (we show below how to find such a matrix). Then
\begin{align}
\begin{matrix}
\cS_n&\rightarrow &{\rm Sp}(2n)/{\rm Sp}(2(n-1))\\
(v,w)&\mapsto& [S_{v,w}]
\end{matrix}\label{eq:symplecticpaircosetmap}
\end{align}
establishes the claimed one-to-one correspondence~\footnote{
To show that this is well-defined, suppose that $[S_{v,w}]=[S_{v',w'}]$,
   then $S_{v,w}^{-1}S_{v',w'}\in 
{1\ 0\choose 0\ 1}
\oplus {\rm Sp}(2(n-1))$,
 and it follows immediately that the first two standard basis vectors $e_1,e_2$ are mapped identically under $S_{v,w}$ and $S_{v',w'}$, {\em i.e.}, $v=S_{v,w}e_1=S_{v',w'}e_1=v'$ and $w=S_{v,w}e_2=S_{v',w'}e_2=w'$. To show that this parameterization is injective, suppose $[S_{v,w}]\neq [S_{v',w'}]$. Then we must have $(v,w)\neq (v',w)$ since otherwise $S^{-1}_{v,w}S_{v',w'}\in 
{1\ 0 \choose 0\ 1}
\oplus {\rm Sp}(2(n-1))$, a contradiction.
 } 
between~$\cS_n$ and ${\rm Sp}(2n)/{\rm Sp}(2(n-1))$, where we write $[S]=S\cdot {\rm Sp}(2(n-1))$ for the coset represented by~$S$.

\begin{remark}
  Another way to think of the subgroup algorithm for the symplectic
  group is the following: The coset $G_n/G_{n-1}$ will simply be
  represented by a symplectic pair $(v,w)\in\cS_n$ along with an arbitrary basis for the space
  orthogonal to $v$ and $w$.  Both our algorithms will proceed by
  picking out such a symplectic pair, then repeating in the orthogonal
  space.  It is apparent that this will give the canonical mapping we
  require.  What remains is to find an efficient algorithm for
  computing $v,w$ and the orthogonal space.
 \end{remark}

\section{Algorithms}

We will give two solutions to giving a canonical mapping of integers to ${\rm
  Sp}(2n)$. The first is based on symplectic Gaussian elimination, but
has complexity~$O(n^4)$. It is mainly of didactical interest.  The
second algorithm uses symplectic transvections and achieves a
complexity~$O(n^3)$.  Note that these algorithms do not give the same
canonical mapping.  We also provide an algorithm for the inverse problem,
finding the integer associated with a member of ${\rm SP}(2n)$.

\subsection{An algorithm with runtime $O(n^4)$ based on Gaussian elimination}

We present an algorithm $\mathsf{SYMPLECTIC}(n,i)$ which produces the $i$th 
symplectic matrix $S_i\in {\rm Sp}(2n)$.  The
algorithm is described in Fig.~\ref{fourthorder}.

We analyze the algorithm step by step.  Step 1 sets $s$ to be the
number of different choices of nonzero bitstrings of length $n$ and
$k$ to be a choice of one of them based on the input $i$.  Step 2
creates the vector $v_1$ corresponding to $k$.  Step 3 computes a
basis for the symplectic space including $v_1$.  Steps 4 and 5 pick
out a $w_1'$ based on $2n-1$ bits from $i$ such that $w_1'$ can be any
vector with $\langle w_1',v_1\rangle=1$.  Step 6 defines the desired representative~$g_n$ of a
 coset $[g_n]$.  Finally, step 7 multiplies the~$g_n$ by a symplectic
of the next smaller size in the chain, if necessary,  and returns the answer.

The runtime of this algorithm is dominated by the Gram-Schmidt
procedure $O(n^3)$, which is invoked $n$ times. Hence the total complexity of
this algorithm is~$O(n^4)$. 


\begin{figure}
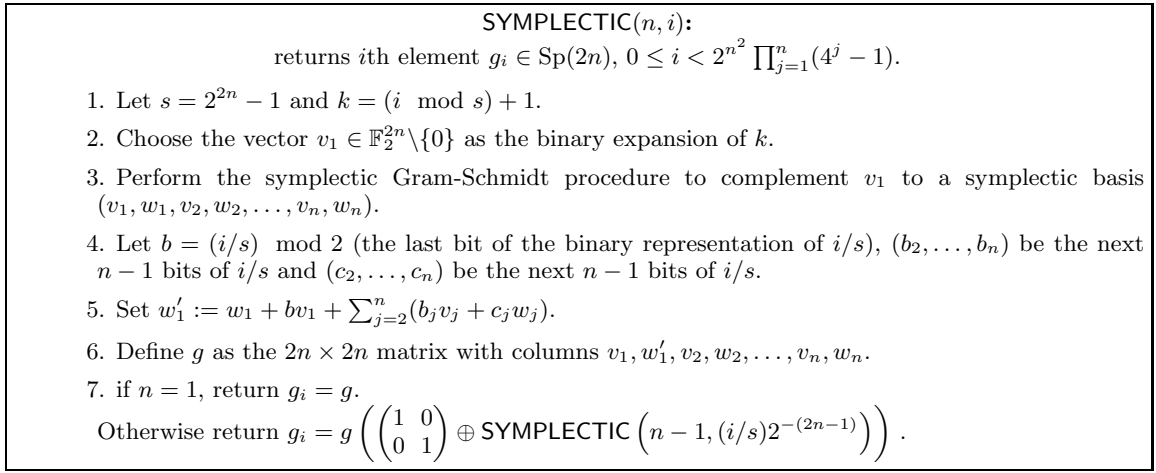

\begin{center}
\begin{protocol}{}
{$\mathsf{SYMPLECTIC}(n,i)$:}{returns $i$th element $g_i\in {\rm Sp}(2n)$,\ $0\le i <  2^{n^2} \prod_{j=1}^n(4^j-1)$.}
\item[1.] Let $s=2^{2n}-1$ and $k=(i \mod s) +1$.
\item[2.] Choose the vector $v_1\in \mathbb{F}_2^{2n}\backslash\{0\}$ as
the binary expansion of $k$.
\item[3.] Perform the symplectic Gram-Schmidt procedure  to complement~$v_1$ to a symplectic basis $(v_1,w_1,v_2,w_2,\ldots,v_n,w_n)$.
\item[4.] Let $b=(i/s) \mod 2$ (the last bit of the binary representation of
$i/s$), $(b_2,\ldots,b_n)$ be the next $n-1$ bits of $i/s$ and $(c_2,\ldots,c_n)$
be the next $n-1$ bits of $i/s$.  
\item[5.]Set $w_1':=w_1+bv_1+\sum_{j=2}^n (b_j v_j+c_jw_j)$. 
\item[6.] Define $g$ as the $2n\times 2n$ matrix with columns $v_1,w_1',v_2,w_2,\ldots,v_n,w_n$.
\item[7.] if $n=1$, return $g_i=g$.  \\
Otherwise
return  $g_i=g \left(\left(\begin{matrix}
1&0\\
0&1
\end{matrix}\right)\oplus {\mathsf{SYMPLECTIC}}\left(n-1,(i/s)2^{-(2n-1)}\right)\right)$ .
\end{protocol}
\end{center}
\caption{Symplectic algorithm with run-time $O(n^4)$.\label{fourthorder}}
\end{figure}

\subsection{An improved  algorithm with runtime $O(n^3)$}
Here we present an alternative to~$\mathsf{SYMPLECTIC}(n,i)$ which does not rely on symplectic Gaussian elimination and achieves a complexity of~$O(n^3)$. To describe and analyze our improved algorithm, we require a certain family of symplectic matrices: For a vector $h\in\mathbb{F}_2^{2n}$, define the {\em symplectic transvection} $Z_h$ as the map
\begin{align*}
Z_h: \mathbb{F}_2^{2n}&\rightarrow \mathbb{F}_2^{2n}\\
v&\mapsto v+\langle v,h\rangle h
\end{align*}
or
\begin{equation}
Z_h v = v + \langle v,h\rangle h
\end{equation}
where $Z_h$ is represented as a matrix and $v$ is a column vector.  It
is apparent that, given $h$, $Z_h v$ can be computed in $O(n)$ time,
which is faster than one could even read all $(2n)^2$ elements of
the matrix $Z_h$. Furthermore, $Z_h M$, where $M$ is a symplectic matrix,
can be computed in $O(n^2)$ time.  This will be essential to the
efficiency of our improved algorithm.

The group ${\rm Sp}(2n)$ is generated by transvections, however we do
not need this fact directly. The proof of this statement involves the
following well-known statement (see e.g.,~\cite[Section~2]{Sal06}),
which we express in an algorithmic fashion for later use.
\begin{lemma}\label{lem:transvection}
Let $x,y\in\mathbb{F}_2^{2n}\backslash\{0\}$ be two non-zero vectors. Then
\begin{align}
y=Z_h x\qquad\textrm{ for some }h\in \mathbb{F}_2^{2n}\label{eq:labelh}
\end{align}
or
\begin{align}
y=Z_{h_1}Z_{h_2} x\qquad\textrm{ for some }h_1,h_2\in \mathbb{F}_2^{2n}\label{eq:labelhtwo}
\end{align}
In other words, $x$ can be mapped to $y$ by at most two transvections.  Furthermore, there is an algorithm that outputs either~$h$ satisfying~\eqref{eq:labelh} or $(h_1,h_2)$ satisfying~\eqref{eq:labelhtwo} in time~$O(n)$.
\end{lemma}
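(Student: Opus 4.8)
The plan is to build everything from one elementary observation: whenever $\langle x,y\rangle=1$, the single transvection with direction $h=x+y$ already sends $x$ to $y$. This uses that the symplectic form over $\mathbb{F}_2$ is alternating, i.e.\ $\langle v,v\rangle=v^T\Lambda(n)v=0$, since each block contributes $v_{2i-1}v_{2i}+v_{2i}v_{2i-1}=0$. Granting that, $\langle x,h\rangle=\langle x,x\rangle+\langle x,y\rangle=1$, so $Z_hx=x+\langle x,h\rangle h=x+(x+y)=y$, which is~\eqref{eq:labelh}. The degenerate case $x=y$ is covered by~\eqref{eq:labelh} as well, trivially with $h=0$.

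It then remains to treat $\langle x,y\rangle=0$ with $x\neq y$, and here I would aim for two transvections by routing $x$ to $y$ through an intermediate vector $w$ with $\langle x,w\rangle=1$ and $\langle w,y\rangle=1$. Once such a $w$ is available, applying the one‑transvection construction above to the pair $(x,w)$ and then to $(w,y)$ gives $Z_{h_2}x=w$ and $Z_{h_1}w=y$ with $h_2=x+w$ and $h_1=w+y$, hence $y=Z_{h_1}Z_{h_2}x$, which is~\eqref{eq:labelhtwo}. So the whole content reduces to producing such a $w$, and doing so in $O(n)$ time.

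To construct $w$ I would decompose coordinates into the $n$ symplectic blocks, writing $x^{(i)}=(x_{2i-1},x_{2i})\in\mathbb{F}_2^2$ and likewise $w^{(i)},y^{(i)}$, so that $\langle x,w\rangle=\sum_{i=1}^n\langle x^{(i)},w^{(i)}\rangle$ with each summand taken with respect to the one‑block form $\Lambda(1)$. If some block $i$ has $x^{(i)}\neq 0$ and $y^{(i)}\neq 0$, I look for $w$ supported on that block alone: when $x^{(i)},y^{(i)}$ are linearly independent in $\mathbb{F}_2^2$ the map $b\mapsto(\langle x^{(i)},b\rangle,\langle y^{(i)},b\rangle)$ is a bijection of $\mathbb{F}_2^2$, so $(1,1)$ has a preimage; when $x^{(i)}=y^{(i)}$ (the only other option for two nonzero vectors of $\mathbb{F}_2^2$) any $w^{(i)}$ with $\langle x^{(i)},w^{(i)}\rangle=1$ works, and exists since $x^{(i)}\neq 0$. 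Otherwise no block is shared; since $x\neq 0\neq y$ I can pick distinct blocks $i_0$ with $x^{(i_0)}\neq 0$ (so $y^{(i_0)}=0$) and $i_1$ with $y^{(i_1)}\neq 0$ (so $x^{(i_1)}=0$), and take $w$ supported on $i_0,i_1$ with $\langle x^{(i_0)},w^{(i_0)}\rangle=1$ and $\langle y^{(i_1)},w^{(i_1)}\rangle=1$, which yields $\langle x,w\rangle=1$ and $\langle w,y\rangle=1$.

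For the complexity claim: computing $\langle x,y\rangle$, scanning the $n$ blocks to locate the shared block or the pair $i_0,i_1$, solving the resulting $O(1)$‑sized $\mathbb{F}_2$‑linear system, and assembling $w$ and then $h$ or $(h_1,h_2)$ are each $O(n)$ operations. I expect the only real subtlety to be the construction of $w$: one must check the intermediate vector exists for \emph{every} configuration of the block supports of $x$ and $y$ — in particular the degenerate sub‑case $x^{(i)}=y^{(i)}\neq 0$ on a shared block, and the sub‑case where $x$ and $y$ have disjoint block supports, which forces the use of two distinct blocks rather than one.
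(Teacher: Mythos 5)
Your proposal is correct and follows essentially the same route as the paper: reduce the non-orthogonal case to the single transvection $h=x+y$, and for $\langle x,y\rangle=0$ route through an intermediate vector $w$ with $\langle x,w\rangle=\langle w,y\rangle=1$ found via a block-by-block scan (shared nonzero block, else two disjoint blocks). The only cosmetic difference is that you take $w$ supported solely on the selected block(s) while the paper sets $z=x+(\text{a one- or two-block correction})$, and you spell out the sub-case $x^{(i)}=y^{(i)}$ versus $x^{(i)},y^{(i)}$ independent a bit more explicitly than the paper does.
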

\begin{proof}
 If $x=y$, the algorithm outputs $h=0$. Otherwise, it computes $\langle x,y\rangle$ and proceeds as follows:
\begin{enumerate}[(i)]
\item if $\langle x,y\rangle=1$, the algorithm outputs $h=x+y$.  It is easy to check that this has the required property~\eqref{eq:labelh}.
\item if $\langle x,y\rangle=0$, the algorithm computes some $z\in\mathbb{F}_2^{2n}$~such that $\langle x,z\rangle=\langle z,y\rangle=1$. Concretely, this is achieved e.g., by trying to locate an index $j\in\{1,\ldots,2n\}$ such that $(x_{2j-1},x_{2j})\neq (0,0)$ and $(y_{2j-1},y_{2j})\neq (0,0)$.  If such an index~$j$ is found, then there is a pair $(v,w)\in \mathbb{F}_2^2$ such that
$x_{2j-1}w+x_{2j}v=y_{2j-1}w+y_{2j}v=1$ and we set $z=x+ve_{2j-1}+we_{2j}$. Otherwise, there must be two distinct indices $j,k\in\{1,\ldots,2n\}$ such that $(x_{2j-1},x_{2j})\neq (0,0), (y_{2j-1},y_{2j})=(0,0)$ and $(x_{2k-1},x_{2k})= (0,0), (y_{2k-1},y_{2k})\neq (0,0)$ since $x$ and $y$ are non-zero. Then there are pairs $(v,w),(v',w')\in\mathbb{F}_2^2$ such that 
$x_{2j-1}w+x_{2j}v=y_{2k-1}w'+y_{2k}v'=1$ and we set $z=x+ve_{2j-1}+we_{2j}+v'e_{2k-1}+w'e_{2k}$.

This reduces the problem to~$(i)$ (mapping $x$ to $z$ and $z$ to $y$); the algorithm  outputs $h_1=x+z$ and $h_2=z+y$  and~\eqref{eq:labelhtwo} follows.
\end{enumerate}
\end{proof}

\begin{figure}
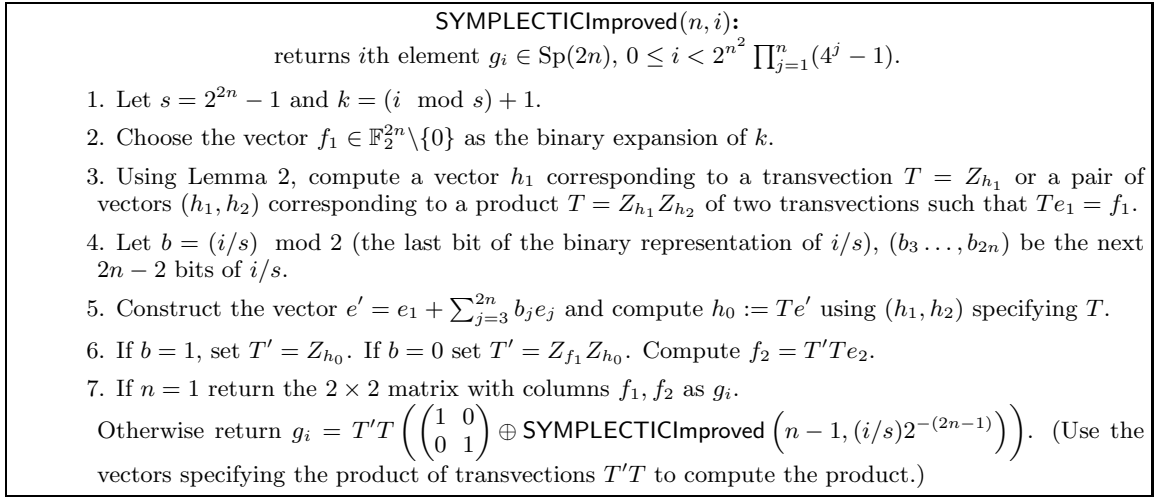

\begin{center}
\begin{protocol}{}
{$\mathsf{SYMPLECTICImproved}(n,i)$:}{returns $i$th element $g_i\in {\rm Sp}(2n)$,\ $0\le i <  2^{n^2} \prod_{j=1}^n(4^j-1)$.}
\item[1.] Let $s=2^{2n}-1$ and $k=(i \mod s)+1$.
\item[2.] Choose the vector $f_1\in \mathbb{F}_2^{2n}\backslash\{0\}$ as
the binary expansion of $k$.
\item[3.] Using Lemma~\ref{lem:transvection}, compute a vector $h_1$ corresponding to a transvection $T=Z_{h_1}$ or a pair of vectors~$(h_1,h_2)$ corresponding to  a product~$T=Z_{h_1}Z_{h_2}$ of two transvections such that $Te_1=f_1$. 
\item[4.] Let $b=(i/s) \mod 2$ (the last bit of the binary representation of
$i/s$), $(b_3\ldots,b_{2n})$ be the next $2n-2$ bits of $i/s$.
\item[5.] Construct the vector $e'=e_1+ \sum_{j=3}^{2n} b_j e_j$ and compute 
$h_0:=T e'$ using $(h_1,h_2)$ specifying $T$. 
\item[6.] If $b=1$, set $T'=Z_{h_0}$.   If $b=0$ set $T'=Z_{f_1} Z_{h_0}$.  Compute $f_2=T'T e_2$.
\item[7.] If $n=1$ return the $2\times 2$ matrix with columns $f_1,f_2$ as $g_i$.\\
Otherwise return $g_i=T'T \left(
\left(\begin{matrix}
1&0\\
0&1
\end{matrix}\right) \oplus {\mathsf{SYMPLECTICImproved}}\left(n-1,(i/s)2^{-(2n-1)}\right)\right)$. (Use the
vectors specifying the product of transvections $T'T$ to compute the product.)
\end{protocol}
\end{center}
\caption{Improved symplectic algorithm using transvections that runs in time
$O(n^3)$.
\label{improvedalgorithm}}
\end{figure} 

Our improved algorithm based on transvections is shown in
Fig.~\ref{improvedalgorithm}.  Python code that implements it
can be found in the appendix.
We now analyze it step by step.  Step 1
sets $s$ to be the number of different choices of nonzero bitstrings
of length $n$ and $k$ to be a choice of one of them based on the input
$i$.  Step 2 creates the vector $f_1$ corresponding to $k$.  So far
this is just as in the original $\mathsf{SYMPLECTIC}$, save that $v_1$
is now named $f_1$.  Step 3 computes the transvection(s)
that transform the first standard basis vector $e_1$ to
$f_1$.  This can be done efficiently using the algorithm of
Lemma~\ref{lem:transvection}. Step 4 again picks out the bits
that will specify a vector ($T'Te_2$, computed subsequently) which forms a symplectic pair with~$f_1$.  Step 5 and 6 find the transvection or pair
of transvections $T'$ with the property that $T'T e_1=f_1$ and $T'Te_2$ is an arbitrary vector forming  a symplectic pair with~$f_1$).  Thus, by~\eqref{eq:symplecticpaircosetmap}, $g_n\equiv T'T$ represents a unique coset~$[g_n]$ as required for the subgroup algorithm.

To see this it is convenient to define the vectors $f_\ell=Te_\ell$
for $\ell=\{1,\ldots, 2n\}$ corresponding to the images of the
standard basis vectors.
Observe that $(f_1,f_2,\ldots,f_{2n-1},f_{2n})$ is a symplectic
basis. We will show that
\begin{align}
T'T e_1=f_1\qquad\textrm{ and }\qquad T'T e_2=bf_1+f_2+\sum_{\ell=3}^{2n}b_\ell f_\ell\ .\label{eq:toprovezeonetwo}
\end{align}

 By linearity, the vector $h_0$ computed in step 5 of the algorithm has the form
$h_0=f_1+\sum_{k=3}^{2n} b_kf_k$. 
In particular, we get $\langle f_1,h_0\rangle=0$ and $\langle f_2,h_0\rangle=1$, which implies
\begin{align}
Z_{h_0}Te_1=Z_{h_0}f_1=f_1\qquad\textrm{ and }\qquad Z_{h_0}Te_2=Z_{h_0}f_2=f_1+f_2+\sum_{k=3}^{2n}b_kf_k\ \label{eq:toproveeqref}
\end{align}
by the definition of transvections. 
Consider the case when~$b=1$.  Then $T'T=Z_{h_0}T$ and~\eqref{eq:toproveeqref} reduces to~\eqref{eq:toprovezeonetwo}, as claimed.
On the other hand, if $b=0$, then $T'T=Z_{f_1}Z_{h_0} T$ and we can use~\eqref{eq:toproveeqref} to compute
\begin{align*}
T'Te_1&=Z_{f_1}Z_{h_0}Te_1=Z_{f_1}f_1=f_1\qquad \textrm{and}\\
T'Te_2&=Z_{f_1}Z_{h_0}Te_2=Z_{f_1}(f_1+f_2+\sum_{k=3}^{2n}b_kf_k)=f_2+\sum_{k=3}^{2n}b_kf_k\ ,
\end{align*}
confirming~\eqref{eq:toprovezeonetwo}.   

Finally, step 7 multiplies $T'T$ by a symplectic
of the next smaller size in the chain, if necessary,  and returns the answer.
This multiplication takes $O(n^2)$~time because~$T'T$  is a product of transvections associated with known vectors.
  Since there are $n$ recursions,  the total complexity is $O(n^3)$.

\subsection{An algorithm for the inverse problem with runtime $O(n^3)$}

\begin{figure}
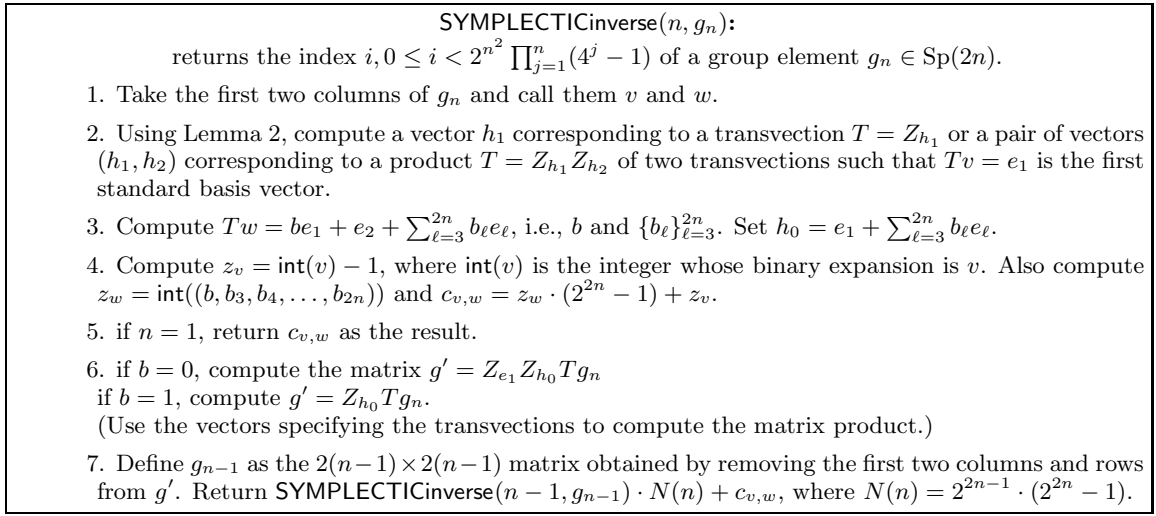

\begin{center}
\begin{protocol}{}
{$\mathsf{SYMPLECTICinverse}(n,g_n)$:}{returns the index $i, 0\leq i< 2^{n^2} \prod_{j=1}^n (4^j-1)$ of a group element $g_n\in {\rm Sp}(2n)$.}
\item[1.] Take the first two columns of $g_n$ and call them $v$ and $w$. 

\item[2.] Using Lemma~\ref{lem:transvection}, compute a vector $h_1$ corresponding to a transvection $T=Z_{h_1}$ 
or a pair of vectors $(h_1,h_2)$ corresponding to a product $T=Z_{h_1}Z_{h_2}$ of two transvections such that  $T v=e_1$ is the first standard basis vector.
\item[3.] Compute $Tw=be_1+e_2+\sum_{\ell=3}^{2n}b_\ell e_\ell$, i.e., $b$ and $\{b_\ell\}_{\ell=3}^{2n}$. Set $h_0=e_1+\sum_{\ell=3}^{2n}b_\ell e_\ell$.
\item[4.] 
\newcommand*{\mint}{\mathop{\mathsf{int}}}
Compute $z_v=\mint(v)-1$, where
$\mint(v)$ is the integer whose binary expansion is~$v$.
Also compute $z_w=\mint((b,b_3,b_4,\ldots,b_{2n}))$ and~$c_{v,w}=z_w\cdot (2^{2n}-1)+z_v$. 
\item[5.] if $n=1$, return $c_{v,w}$ as the result. 
\item[6.] if $b=0$, compute the matrix $g'=Z_{e_1}Z_{h_0}Tg_n$\\
if $b=1$, compute $g'=Z_{h_0}Tg_n$. \\
(Use the vectors specifying the transvections to compute the matrix product.)
\item[7.] Define $g_{n-1}$ as the $2(n-1)\times 2(n-1)$ matrix obtained by removing the first two columns and rows from~$g'$. Return  
$\mathsf{SYMPLECTICinverse}(n-1,g_{n-1})\cdot N(n)+c_{v,w}$, where
 $N(n)=2^{2n-1}\cdot (2^{2n}-1)$. 
\end{protocol}
\end{center}
\caption{Algorithm for taking group elements to numbers: this map implements the inverse of the map $i\mapsto \mathsf{SYMPLECTICImproved}(n,i)$.  Its runtime is~$O(n^3)$. \label{improvedalgorithminverse}}
\end{figure}

Consider the inverse problem: given a group element $g_n\in\mathrm{Sp}(2n)$, we would like to associate to it a unique index $i=i(g_n)$ where $0\leq i< |\mathrm{Sp}(2n)|=2^{n^2}\prod_{j=1}^n (4^j-1)$.  With similar reasoning as before, we can construct an efficient algorithm achieving this. It is shown in Fig.~\ref{improvedalgorithminverse} and will be referred to as $\mathsf{SYMPLECTICinverse}$. It implements the exact inverse map of the  map $i\mapsto \mathsf{SYMPLECTICImproved}(n,i)$ defined by the algorithm in Fig.~\ref{improvedalgorithm} and runs in time~$O(n^3)$. 

 Given a matrix $g_n\in \mathrm{Sp}(2n)$, the algorithm~$\mathsf{SYMPLECTICinverse}$ proceeds recursively by factorizing the given group element into representatives of cosets. Clearly, by definition of $\cS_n$, the coset in $\mathrm{Sp}(2n)/\mathrm{Sp}(2(n-1))$ can be read off from the first two columns~$(v,w)$ of~$g_n$ (Step 1).  To uniquely index different cosets, the algorithm relies on the transvection~$T$ computed in step~$2$. After step~$3$, the non-zero vector~$v$, together with the (arbitrary) bits $b,\{b_\ell\}_{\ell=3}^{2n}$, uniquely specify the symplectic pair~$(v,w)$ (and hence a coset). In step~$4$, this is used to compute an associated (unique) number $c_{v,w}$, where $0\leq c_{v,w}<N(n)$ and where $N(n)$ is the number of different cosets  in $\mathrm{Sp}(2n)/\mathrm{Sp}(2(n-1))$. 
If $n=1$, the number $c_{v,w}$ already indexes a unique group element in $\mathrm{Sp}(2)$, and no recursion is necessary (step 5). 

If $n>1$, the algorithm constructs a symplectic matrix~$V$ such that 
\begin{align}
g':=Vg_n=\left(\begin{matrix}
1 & 0\\
0 & 1
\end{matrix}
\right)\oplus g_{n-1}\ \label{gprimeproperty}
\end{align}
and returns the value $\mathsf{SYMPLECTICinverse}(n-1,g_{n-1})\cdot N(n)+c_{v,w}$ (Step 7). This number encodes both~$c_{v,w}$, i.e., the coset in $\mathrm{Sp}(2n)/\mathrm{Sp}(2(n-1))$, as well as the all the cosets in the chain of subgroups. 

It is clear that this algorithm has runtime~$O(n^3)$ if the matrix product in step~$6$ is computed using the vectors specifying the transvections. It remains to show that the matrix $g'$ constructed in step~$6$ has property~\eqref{gprimeproperty}. 

By definition, we have $g_n e_1=v$ and $g_ne_2=w$. In particular, the definition of $T$, the coefficients $b$, $\{b_\ell\}_{\ell=3}^{2n}$ and $h_0$ give
\begin{align*}
(Tg_n) e_1&=e_1\\
(Tg_n) e_2&=b e_1+e_2+\sum_{\ell=3}^{2n}b_\ell e_\ell=h_0+(b-1)e_1+e_2\ .
\end{align*}
Since $\langle e_1,h_0\rangle=0$, $\langle h_0,h_0\rangle=0$ and $\langle e_2,h_0\rangle=1$,  this implies
\begin{align*}
(Z_{h_0}Tg_n) e_1&=e_1\\
(Z_{h_0}Tg_n) e_2&=(b-1) e_1+e_2\ .
\end{align*}
This shows that if $b=1$, then $g'=Z_{h_0}Tg_n$ has the required property. If $b=0$, we use the fact that $Z_{e_1}e_1=e_1$ and $Z_{e_1}(e_1+e_2)=e_2$ to conclude that $g'=Z_{e_1}Z_{h_0}Tg_n$ has the desired form.

\section{Acknowledgments} J.A.S was supported by IARPA MQCO program under contract no. W911NF-10-1-0324.

\bibliographystyle{plain}
\bibliography{q}

\pagebreak
\appendix*
\section{Python code implementing SYMPLECTICimproved and SYMPLECTICinverse}
\lstinputlisting{symplectic.py}

\end{document}